\newtheorem{theorem}{Theorem}%[section]
\newtheorem{example}{Example}
\newcommand{\ket}[1]{|#1\rangle}
\newcommand{\bra}[1]{\langle #1|}
\newcommand{\tr}[1]{{\rm tr}[#1]}
\begin{document}
\title{Minimum-error discrimination of thermal states}
\author{Seyed Arash Ghoreishi}
\email{arash.ghoreishi@savba.sk}
\author{Mario Ziman}
\affiliation{RCQI, Institute of Physics, Slovak Academy of Sciences, D\'ubravsk\'a cesta 9, 84511 Bratislava, Slovakia}
\begin{abstract}
We study several variations of the question of minimum-error discrimination of thermal states. Besides of providing the optimal values for the probability error we also characterize the optimal measurements. For the case of a fixed Hamiltonian we show that for a general discrimination problem the optimal measurement is the measurement in the energy basis of the Hamiltonian. We identify a critical temperature determining whether the given temperature is best distinguishable from thermal state of very high, or very low temperatures. Further, we investigate the decision problem of whether the thermal state is above, or below some threshold value of the temperature. Also, in this case, the minimum-error measurement is the measurement in the energy basis. This is no longer the case once the thermal states to be discriminated have different Hamiltonians. We analyze a specific situation when the temperature is fixed, but the Hamiltonians are different. For the considered case, we show the optimal measurement is independent of the fixed temperature and also of the strength of the interaction.     
\end{abstract}
\maketitle

\section{Introduction}
Temperature $T$, taking positive values if measured in Kelvins,
is one of the fundamental physical parameters of systems in so-called
thermal equilibrium states with their surrounding
(playing the role of the heat bath) \cite{ChangOxford2004,PuglisiRep2017} . For quantum systems the temperature
is assigned to density operators of the form
\begin{equation}
\rho_{\beta}(H)=\frac{e^{-\beta H}}{{\rm tr}{e^{-\beta H}}}\,,
\end{equation}
where $\beta=\frac{1}{k_B T}$ is the so-called inverse temperature,
$H$ is the system's Hamiltonian and $k_B$ is the Boltzmann constant.
These states are named as thermal, or Gibbs states, and
$Z_\beta={\rm tr}e^{-\beta H}$ is the so-called partition function \cite{GemmerBook2004,VinjanampathyContemp2016,DeffnerBook2019,AndersNJP2013}.
For a given thermal state $\rho_\beta (H)$ the probability distribution
of the system's energy reads $w_\beta(E_j)=e^{-\beta E_j}/Z_\beta$.  

In this paper we address the questions related to the distinguishability
of temperatures. The questions of distinguishability of quantum states
represent one of the basic fundamental problems of quantum theory.
There are several mathematical variations of this question depending on our
particular purposes: one might be interested in how to minimize the error
of our conclusions \cite{HelstromBook1976}, or one could question
the ability to make error-free
(unambiguous) conclusions \cite{IvanovicPLA1987,PeresPLA1988,DieksPLA1988}, or some mixed strategies to achieve the balance between the mentioned methods \cite{CrokePRL2006}. In its simplest form
the task is the following: consider a source of
systems described either by the thermal state $\varrho_\beta(H)$, or
$\varrho_{\beta^\prime}(H^\prime)$. We are asked to perform an experiment
on a finite number of copies and conclude the identity of
the state. It is known that such a decision cannot be done perfectly unless
the supports of the density operators are mutually orthogonal. Since thermal
states of nonzero temperature are full-rank, it follows that only
mutually orthogonal ground states (zero temperature) of suitable
Hamiltonians can be discriminated perfectly.

As we already mentioned there are two conceptually different ways how to formulate imperfections in the decision problems. In both cases we optimize average probabilities of the performance. When we allow our conclusions might be errorneous, then our goal is to minimize (on average) the number of (random, but) invalid conclusions, i.e. the probability of error. However, if we do not allow conclusions to be incorrect, then we must allow also inconclusive outcomes. Our aim, in such a case, is to minimize the rate of inconclusive outcomes, that is, the probability of failure. Both these problems have been extensively studied in literature \cite{BergouLNP2004,BarnettAOP2009,BaeJPA2015,BarnettJPA2009,BanIJTP1997,BarnettPRA2001,ChouPRA2003,AnderssonPRA2002,ChouPRA2004,MochonPRA2006,BaeNJP2013,GhoreishiQIP2019}.  It is known that full-rank density operators cannot be unambiguously discriminated. Therefore, for a pair of thermal states the error-free conclusion is possible only if one of the states is the ground state. Moreover, if one of the states is of nonzero temperature, then only the identity of the nonzero temperature can be concluded in an error-free manner. Consequently, the questions that remain are related to minimum-error discrimination.

The problem of discrimination between thermal baths by quantum probes was investigated recently \cite{GiananiPRR2020,CandeloroPRA2021}. Based on the fact that nonequilibrium states of quantum systems in contact with thermal baths can be used to distinguish environments
with different temperatures, they study a more generic problem
that consists in discriminating between two baths with different temperatures and show that the existence of coherence in the initial state preparation is beneficial for
the discrimination capability  \cite{GiananiPRR2020}. Moreover the same problem was addressed by dephasing quantum probes and it has been shown that the dephasing quantum probes are useful in discriminating low values of the interaction time and a better performance can be obtained for intermediate values of interacting times \cite{CandeloroPRA2021}. 

Our goal is simple: evaluate the minimum error probability for thermal states and analyze the results. We are interested in understanding when the difference of the temperatures matters, i.e. increases the distingusihability, and how the parameters of the Hamiltonian affects the distinguishability. Is it easier to distinguish larger, or smaller temperatures? Does the strength of the interaction increases the distinguishability, or not? 

This article is organized as follows. In Sec. \ref{sec2} we study the question of discrimination of two states with different temperatures in the presence of same Hamiltonian and show that the measurement is optimal in the energy basis and as a case study we investigate the problem for qubit states and qudits with the same energy separations between energy levels. We generalize to the case of many states in Sec. \ref{sec3} and consider the problem of temperature threshold. In Sec. \ref{sec4} we study the problem of discrimination of thermal states with different Hamiltonians. Finally, in Sec. \ref{conclusion} we close the article with some concluding remarks.

%%%%%%%%%%%%%%%%%%%%%%%%%%%%%%%%%%%%%%%%%%%%%%%%%%%%%%%%%
\section{Binary case with fixed Hamiltonian} \label{sec2}
%%%%%%%%%%%%%%%%%%%%%%%%%%%%%%%%%%%%%%%%%%%%%%%%%%%%%%%%%
For a general binary minimum-error discrimination the optimal value of success probability is known \cite{HelstromBook1976} to be given by the Helstrom formula
\begin{equation}
  p_{\rm error}=\frac{1}{2}(1-\frac{1}{2}||\varrho_1-\varrho_2||_1),
\end{equation}
where $||\Delta||_1={\rm tr}|\Delta|$ is the trace norm (sum of square roots of eigenvalues of $\Delta\Delta^\dagger$). Our goal is to analyze this quantity
for a pair of thermal states of different temperatures of $d$-dimensional
quantum system described by a fixed Hamiltonian $H$. For simplicity,
we will denote the states
as $\varrho_j=\varrho_{\beta_j}(H)$ and label the eigenvalues of $H$
in an increasing order (including degeneracies) as follows
$E_0\leq E_1\leq\cdots\leq E_{d-1}$. Let us denote by $\Pi_j$ projectors
onto the eigenvectors associated with eigenvalues $E_j$, respectively.
Then $\varrho_1=(1/Z_1)\sum_j\exp(-\beta_1E_j) \Pi_j$,
$\varrho_2=(1/Z_2)\sum_j\exp(-\beta_2E_j) \Pi_j$ 
and for the trace distance we get
\begin{equation}
||\rho_1-\rho_2||_1= \sum_j\left|\frac{\exp(-\beta_1E_j)}{Z_1}-\frac{\exp(-\beta_2E_j)}{Z_2}\right|,
\end{equation}
where $Z_1=\sum_j \exp(-\beta_1 E_j)$ and $Z_2=\sum_j \exp(-\beta_2 E_j)$.
Using the associated energy distributions $w_{1j}=w_{\beta_1}(E_j)$
and $w_{2j}=w_{\beta_2}(E_j)$ the minimum error probability equals
\begin{equation}
  p_{\rm error}=\frac{1}{2}(1-\frac{1}{2}\sum_j |w_{1j}-w_{2j}|)\,.
\end{equation}

It follows from the general consideration of the minimum error discrimination
problems that the optimal measurement is formed by a pair of projectors
$Q_1$, $Q_2$ ($Q_1+Q_2=I$) indicating the temperatures $\beta_1,\beta_2$,
respectively. Let us note that for the conclusion associated with $Q_1$ the
outcome probabilities satisfy
the relation $\tr{\varrho_1 Q_1}>\tr{\varrho_2 Q_1}$
and similarly for $Q_2$ we have $\tr{\varrho_2 Q_2}>\tr{\varrho_1 Q_2}$.
In other words, for the considered measurement outcome the conclusion
of the discrimination results in the state maximizing the probability
of the considered outcome. In what follows we will investigate the same
``maximum likelihood" decision strategy in the case of a measurement
in the energy basis.

\subsection{Measurement in energy basis is optimal}
Suppose the considered minimum-error discrimination problem of two states
is going to be decided
in some $n$-valued measurement with effects $F_1,\dots,F_n$ ($F_j\geq O$, $\sum_j F_j=I$).
Let us denote by $f_{xj}=\eta_xw_{xj}$ (with $w_{xj}=\tr{\varrho_x F_j}$) the probability that the state
$\varrho_x$ has lead to the outcome $j$. The probability $\eta_x$ ($\eta_1+\eta_2=1$)
describes the a priori distribution with which the states are sampled. For simplicity 
we assume an unbiased case and set $\eta_x=1/2$, i.e. $q_{xj}=\frac{1}{2}w_{xj}$.
For each outcome $j$ either $q_{1j}>q_{2j}$, or $q_{1j}<q_{2j}$, or $q_{1j}=q_{2j}$.
In accordance with the maximum likelihood decision strategy, if $q_{1j}>q_{2j}$ we conclude
the state is $\varrho_1$. If $q_{1j}<q_{2j}$, then the observation of $j$ leads us to
the conclusion $\varrho_2$. If the probabilities coincide, $q_{1j}=q_{2j}$, then both
options are equal, thus, we toss an unbiased coin to decide whether $\varrho_1$,
or $\varrho_2$. The errors for unbalanced cases equal to the minimal values of $q_{1j}$ and
$q_{2j}$. The error for the balanced case equals to $q_{1j}=q_{2j}$. It follows that the whole
error probability can be written as $p_{\rm error}=\frac{1}{2}\sum_j \min\{w_{1j},w_{2j}\}$.
Further we will use the identity $\min\{w_{1j},w_{2j}\}=\frac{1}{2}(w_{1j}+w_{2j}-|w_{1j}-w_{2j}|)$.
Summing up over $j$ we end up with the formula
$p_{\rm error}=\frac{1}{2}(1-\frac{1}{2}\sum_j |w_{1j}-w_{2j}|)$. Let us note
  that the analogous approach applies also for the problem of discrimination between
  two coherent states by means of the photon number detector \cite{SychPRL2016}.

Let us now consider that the measurement performed is the energy measurement, i.e.
outcomes $E_j$ associated with projectors $\Pi_j$, hence,
$w_{xj}=\tr{\varrho_x\Pi_j}=(1/Z_x)\exp(-\beta_x E_j)$. Inserting these numbers into
the above formula we observe that the obtained expression coincides with
the optimal value for the minimum-error discrimination of two thermal states.
As a result it follows that the measurement of energy implements the optimal
discrimination measurement of thermal states. The observation of the outcome $E_j$
is associated with the thermal state, for which the probability of this outcome
is larger. If the probabilities for both options are the same, we toss a coin
to make the decision. Let us stress that although the probability distributions
$w_{xj}$ for the thermal states are decreasing as the energies are increasing, it is not
straightforward that observation of larger energies implies the conclusion for the state of
larger temperature.

\subsection{Ground-state discrimination}
Consider a special case when $T_1=0$ (i.e. $\beta_1\to\infty$), thus, we aim to distinguish
the ground state $\varrho_1=\frac{1}{\tr{\Pi_0}}\Pi_0$ from a thermal state $\varrho_2=(1/Z_2)\exp{-\beta_2 H}$. Let us remark that for the ground state the energy measurement
necessarily (and with certainty) leads to the outcome $E_0$. As a result,
the observation of any other energy implies the state is thermal and such conclusion
is error-free, thus, unambiguous. Therefore, the  minimum-error probability equals
$p_{\rm error}=\frac{1}{2}\tr{\varrho_2\Pi_0}=\frac{1}{2}(1/Z_2)e^{-\beta_2 E_0}<\frac{1}{2}$.
Since only the observation of $E_0$ is inconclusive in the unambiguous sense, it follows
the probability $p_{\rm failure}=\frac{1}{2}\tr{\Pi_0(\varrho_1+\varrho_2)}
=\frac{1}{2}(1+p_{\rm error})$ characterizes
the failure probability for the unambiguous discrimination between the ground state and
arbitrary thermal state.

\subsection{Case study: Qubit}

Let us investigate in more detail the simplest the case of the qubit system.
The general Hamiltonian has the form $H=aI+\vec{\alpha}\cdot\vec{\sigma}$.
Let us introduce the quantity $\alpha=||\vec{\alpha}||>0$ and operator
$S=\vec{\alpha}\cdot\vec{\sigma}=\ket{\varphi_+}\bra{\varphi_+}-\ket{\varphi_-}\bra{\varphi_-}$,
where $\ket{\varphi_\pm}$ are eigenvectors of $H$ and $S$.
The eigenvalues of $H=aI+\alpha S$ reads $E_\pm=a\pm \alpha$, thus,
we obtain energies $E_0=E_-$ for the ground state and $E_1=E_+$ for the excited one.
Direct calculation gives
\begin{equation}
p_{\rm error}=\frac{1}{2}\bigl(1-\frac{1}{2}\frac{|\sinh{\alpha}(\beta_1-\beta_2)|}{\cosh(\beta_1\alpha)\cosh(\beta_2\alpha)}\bigr)\, .
\end{equation}
The optimal measurement consists of projectors $\Pi_0=\frac{1}{2}(I-S)=\ket{\varphi_-}\bra{\varphi_-}$
and $\Pi_1=\frac{1}{2}(I+S)=\ket{\varphi_+}\bra{\varphi_+}$.
Further, without loss of generality we may assume $\beta_1>\beta_2$ (i.e. $T_1<T_2$). Then
\begin{eqnarray}
  \nonumber
  w_{10}=\tr{\varrho_1 \Pi_0}=\frac{e^{\beta_1 \alpha}}{Z_1}  &>&\frac{e^{\beta_2 \alpha}}{Z_2}=\tr{\varrho_2\Pi_0}=w_{20} \\
  \nonumber
  w_{11}=\tr{\varrho_1 \Pi_1}=\frac{e^{-\beta_1 \alpha}}{Z_1}&<&\frac{e^{-\beta_2 \alpha}}{Z_2}=\tr{\varrho_2\Pi_1}=w_{21} \,,
\end{eqnarray}
where $Z_j=2\cosh{\beta_j\alpha}$. Therefore, if the ground energy is recorded we conclude
the smaller of the temperatures ($T_1$), whereas the if excited energy is observed, then the state of the larger temperature ($T_2$) is identified.

Let us note that the minimum error probability in the qubit case
is independent of the parameter $a$. In fact, this is just the reflection
of the fact that the system's energy is specified up to an additive constant.
In particular, suppose Hamiltonians, e.g., $H$ and $H^\prime=H+xI$ for arbitrary $x$. The direct calculation shows that the value of $x$ does not affect the thermal state 
$\frac{e^{-\beta H^\prime}}{\tr{e^{-\beta H^\prime}}}=
\frac{e^{\beta x}e^{-\beta H}}{\tr{e^{\beta x}e^{-\beta H}}}=
\frac{e^{-\beta H}}{\tr{e^{-\beta H}}}$. Without loss of generality
we may set the ground state energy to zero ($E_0=0$), or, if it suits us,
we may assume the Hamiltonian is traceless. 
Figure \ref{3d} shows the three-dimensional (3D) plot of the probability of error in terms of two temperatures $T_1$ and $T_2$ in the presence of fixed Hamiltonian $H=\sigma_z$.  As it can be seen from the figure the probability of error is symmetric under the exchanging $T_1$ and $T_2$. 
\begin{figure}
  \mbox{\includegraphics[scale=1.6]{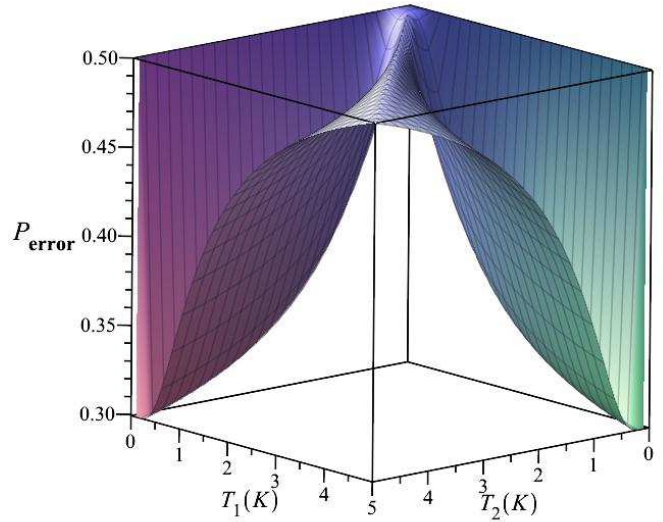}}
\caption{3D plot of probability of error between two satets $\varrho_1$ and $\varrho_2$  in terms of $T_1$ and $T_2$  in the presence of a fixed Hamiltonian $H=\sigma_z$. The probability of error is symmetric under the exchanging $T_1$ and $T_2$. }
\label{3d}
\end{figure}

For $T_1=0$ ($\beta_1\to\infty$) the error probability equals $$p_{\rm error}=\frac{e^{\alpha\beta_2}}{4\cosh{\alpha\beta_2}}=\frac{1}{2(1+e^{-2\alpha\beta_2})}\,,$$ hence, the error probability is decreasing with the temperature $T_2$ (increasing with the inverse temperature $\beta_2$) and the infinite temperature bath is best distinguishable from the ground state. Further, we see from Fig.~\ref{difalpha} that for small temperatures $T_1$ the error probability is gradually increasing until $T_2=T_1$, when it reaches the maximal value $P_{\rm error}=1/2$ (states becomes the same). After that the error probability decreases and converges (with $T_1\to\infty$) to a constant value $P_{\rm error}=\frac{1}{2}[1-\frac{1}{2}\tanh(\frac{\alpha}{T_2})]$. Depending on its value, $T_2$ is best distinguishable either from ground state ($T_1=0$), or the infinite temperature state ($T_1=\infty$). In fact, comparing the limiting values for small and large temperatures we may formulate the following observations.

\begin{itemize}
  \item{\emph{Critical temperature.}}
Consider a traceless qubit Hamiltonian $H$ with the operator norm
$||H||=\alpha$.  Define $T_*=\frac{\alpha}{\rm{arctanh}(\frac{1}{2})}$
and assume $T_2$ is fixed. If $T_2<T_*$, then the error probability achieves its minimum
for very high temperatures (infinity). If $T_2=T_*$, then the limiting error probabilities
  for large and small temperatures coincide. And if $T_2>T_*$, then the
  probability of error is minimal for low temperatures. The same observation was reported recently in \cite{CandeloroPRA2021}.

\item{\emph{Dependence on interaction strength.}}
Figure \ref{difalpha} illustrates how the error probability depends on the interaction strength $\alpha$. It turns out that for the smaller temperatures (differences) the smaller the interaction the smaller the error. However, for larger temperatures (differences of temperatures) the strength of the interaction might improve the discrimination of thermal states.

\item{\emph{Dependence on the difference of temperatures.}}
  Define the temperature difference $\Delta=T_2-T_1$. Is it easier to discriminate
  temperatures of a fixed $\Delta$ for smaller, or for larger temperatures? Figure \ref{delt} shows
  a general tendency that the difference for larger temperatures ($T>\alpha$) is more difficult
  to discriminate. In other words, the states of large temperatures are becoming less
  and less distinguishable. 
\end{itemize}
%%%%%%%%%%%%%%%%%%%%%%%%%%
\subsection{Qudits: Finite dimensional linear harmonic oscillator}
%%%%%%%%%%%%%%%%%%%%%%%%%%%

  In this section we will address the case of $d$-dimensional systems (qudits) with the Hamiltonian 
\begin{eqnarray}
H=E_0\ket{0}\bra{0}+\cdots + E_{d-1}\ket{d-1}\bra{d-1}
\nonumber
\end{eqnarray}
As before, the thermal states $\varrho_\beta$ for $H$
and $H^\prime=H-E_0\pmb{I}$ coincide. Without loss of generality we will
assume the energy of the ground state is set to zero and 
label the excited energies by their difference from $E_0$, i.e.
$\alpha_j=E_j-E_0$ and
\begin{eqnarray}
  H^\prime=\alpha_1\ket{1}\bra{1}+\cdots \alpha_{d-1}\ket{d-1}\bra{d-1}\,.
\end{eqnarray}
Let us recall that the trace distance of the thermal states equals
\begin{equation}
  \nonumber
||\rho_1-\rho_2||= |\frac{1}{Z_1}-\frac{1}{Z_2}|+\sum_{j=1}^{d-1}|\frac{\exp(-\beta_1\alpha_j)}{Z_1}-\frac{\exp(-\beta_2\alpha_j)}{Z_2}|\,.
\end{equation}

Define the limiting values
  \begin{eqnarray}
    q_0=\lim_{T_1\to 0}||\varrho_1-\varrho_{2}||\,,\quad
    q_\infty=\lim_{T_1\to \infty}||\varrho_1-\varrho_{2}||
    \end{eqnarray}
being functions of $T_2$. If $q_\infty\geq q_0$, then $T_2$ is best discriminated
with temperatures $T_1$ from the area of very low temperatures.
Otherwise, $T_2$ is best discriminated with higher temperatures $T_1$.
In particular,
\begin{eqnarray}
q_0&=&|1-\frac{1}{Z_2}|+\frac{1}{Z_2}\sum_j e^{-\beta_2\alpha_j}=\frac{2(Z_2-1)}{Z_2}\,, \\ 
q_\infty&=&|\frac{1}{d}-\frac{1}{Z_2}|+\sum_j |\frac{1}{d}-\frac{\exp(-\beta_2\alpha_j)}{Z_2}|\,.
\end{eqnarray}
The identity $q_0=q_\infty$ identifies the critical value of temperature $T^*$ determining whether
the minimum of the error probability of distinguishing $T_2$ and $T_1$ is achieved for larger, or small
temperatures $T_1$ (see also the previous subsection).

  As an example consider the finite-dimensional analog of a linear harmonic oscillator.
  That is a system with the following Hamiltonian
\begin{equation}
  H= E_0 \pmb{I} + \alpha\sum_{j=1}^{d-1} j\ket{j}\bra{j} 
\end{equation}
Let us stress its energy levels are equidistant. Since in this case
$\frac{1}{d}-\frac{\exp(-\beta_2\alpha_1)}{Z_2} \geq 0$, it follows
\begin{eqnarray}
  q_\infty=\frac{2-Z_2}{Z_2}+\frac{d-2}{d}\,,
\end{eqnarray}
and subsequently the condition $q_0=q_\infty$ implies
\begin{equation}
\frac{d-1}{d+1} = \sum_{j=1}^{d-1}\exp{(-\frac{j\alpha}{T^*})} = \frac{\exp(-\frac{\alpha}{T^*})-\exp{(-\frac{N\alpha}{T^*})}}{1-\exp(-\frac{\alpha}{T^*})}\,.
\end{equation}
The Figure \ref{crit} illustrates the dependence of the critical temperatures $T^{*}$ on the system's dimension.
For the simplest cases we can evaluate the values also analytically 
\begin{eqnarray}
T^{*}&=&\frac{\alpha}{\ln(3)}  \enspace\enspace\enspace\enspace \text{for $d=2$} \nonumber \\
T^{*} &=&-\frac{\alpha}{\ln(\frac{2}{\sqrt{3}-1})} \enspace\enspace\enspace\enspace \text{for $d=3$}. 
\end{eqnarray}

\begin{figure}
\mbox{\includegraphics[scale=0.45]{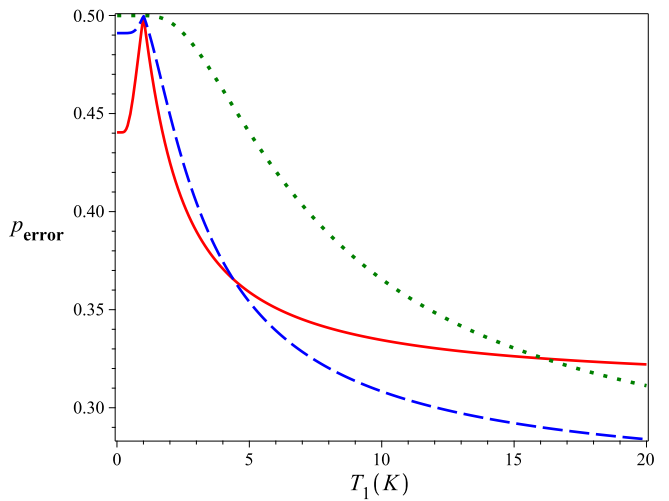}}
\caption{Probability of error for a fixed temperature $T_2 =1 (K)$ and some values of $\alpha$. $\alpha=1$(red line), $\alpha=2$ (dashed blue line) and $\alpha=5$ (dotted green line). It can be seen that for smaller temperatures, the smaller the interaction results in the smaller error.}
\label{difalpha}
\end{figure}

\begin{figure}
\mbox{\includegraphics[scale=0.45]{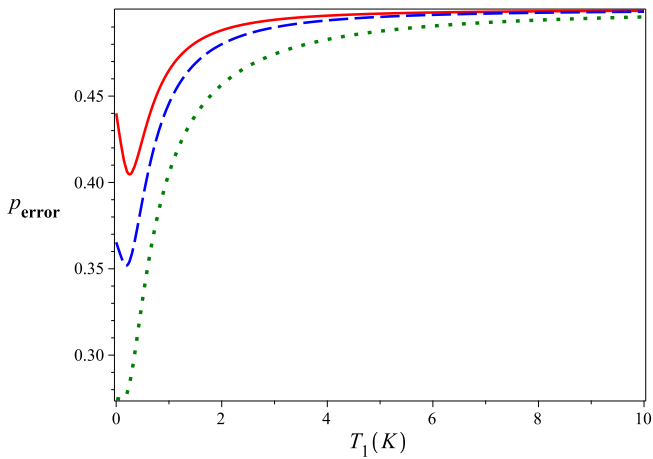}}
\caption{Probability of error for  fixed values of $\Delta T=0.5 (K)$ (red line),$\Delta T=1 (K)$ (dashed blue Line) and $\Delta T=5 (K)$ (dotted green Line). All plots are depicted for $\alpha=0.5$. 
  A general tendency can be seen that the difference for larger temperatures ($T>\alpha$) is more difficult
  to discriminate. }
\label{delt}
\end{figure}

\begin{figure} 
\mbox{\includegraphics[scale=0.45]{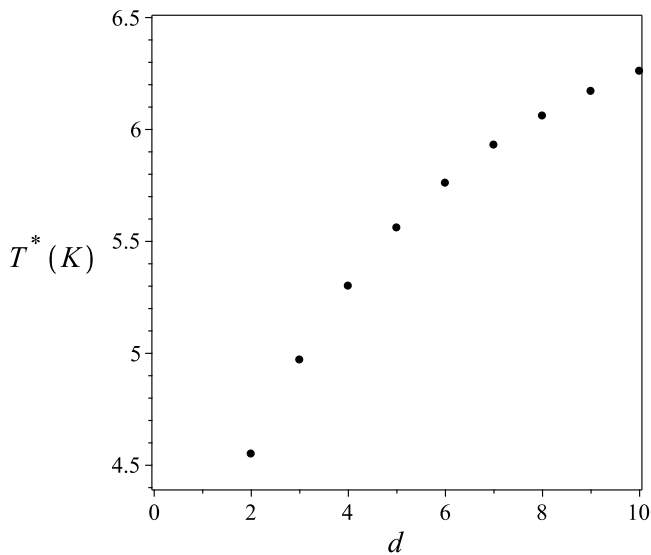}}
\caption{Critical Temperatures for different dimensions for fixed $\alpha=5$. The Critical Temperature increases for the systems with higher dimensions. }
\label{crit}
\end{figure}

%%%%%%%%%%%%%%%%%%%%%%%%%%%%%
\section{Case of multiple temperatures} \label{sec3}
%%%%%%%%%%%%%%%%%%%%%%%%%%%%
In this section we will investigate the discrimination of $N$ thermal states
$\varrho_1,...,\varrho_N$ associated with inverse temperatures $\beta_1>\beta_2>\cdots >\beta_N$,
respectively. That is, the states are ordered accodring to their temperature in increasing order
and appear with a priori distribution $\eta_1,\dots,\eta_N$ ($\sum_j \eta_j=1$).
Let us introduce a success probability $p_{\rm success}(F)=\sum_j\eta_j\tr{\varrho_j F_j}$,
where the observation of outcome $F_j\geq O$ is used to conclude the state is $\varrho_j$.
It follows that the error probability for this measurement equals $p_{\rm error}(F)=1-p_{\rm success}(F)$,
thus, minimalization of the error probability over all positive operator-valued measure (POVMs) $F: F_1,\dots,F_N$ ($\sum_j F_j=I$)
is equivalent to the maximization of the success probability over $F$, i.e.
$p_{\rm error}=\min_F [p_{\rm error}(F)]=\min_F [1-p_{\rm success}(F)]=1-\max_F [p_{\rm success}(F)]$.
It is technically slightly simpler to optimize the success probability $p_{\rm success}$ and therefore
we will focus on this question.

It is known \cite{BaeNJP2013,BaePRA2013} that using the convex duality of the question of success optimization
$p_{\rm success}=\max_F p_{\rm success}(F)$ coincides with the following problem:
\begin{eqnarray}
  p_{\rm success}=\min_{K}\tr{K}\,, K\geq \eta_j\varrho_j \quad\forall j\,.
\end{eqnarray}
Let us use the notation $\varrho_j=\sum_l w_{jl}\Pi_l$, where $w_{jl}=\tr{\varrho_j\Pi_l}
=\frac{1}{Z_j}e^{-\beta_j E_l}$. Assuming the a priori distribution is not biased, i.e. $\eta_j=1/N$,
and writing $K=\frac{1}{N}\sum_l k_l\Pi_l$, it follows that the choice $k_l=\max_j w_{jl}$
guarantees the conditions $NK\geq \varrho_j$ hold for all $j$. The optimality of $K$ can
be argued \cite{BaeNJP2013} if we manage to design a measurement satisfying $\tr{F_j (K-\frac{1}{N}\varrho_j)}=0$.
Because of the positivity of  $F_j$ and $K-\frac{1}{N}\varrho_j$ the orthogonality under trace
is equivalent to the orthogonality of the supports of these operators. Since
$K-\frac{1}{N}\varrho_j=\frac{1}{N}(\sum_l (k_l - w_{jl}) \Pi_l)$, it follows that
$F_j =\sum_{l: \arg[\max_m w_{ml}]=j} \Pi_l$ are orthogonal to $K-\frac{1}{N}\varrho_j$ and
it is straightforward to verify that $\sum_j F_j=I$. In fact,
each of the projectors in the range of $F_j$ cancels out from the sum  $\sum_l (k_l - w_{jl}) \Pi_l$.
Therefore, the designed POVM is optimal \cite{BaeNJP2013}. In the case a state $\varrho_j$ does not maximize
any of the $\max_m w_{ml}$, then none of the term in the sum $\sum_l (k_l - w_{jl}) \Pi_l$ vanishes.
Consequently, the associated effect $F_j$ vanishes, thus, the defined optimal POVM does not lead
us to the conclusion for $\varrho_j$.

To summarize, we found that the effects
\begin{eqnarray}
  F_j=\sum_{l\in I_j}  \Pi_l\,,
\end{eqnarray}
where $I_j=\{l:\arg(\max_m\{\tr{\varrho_m\Pi_l}\}_m)=j\}$ and
$\Pi_l$ are eigenprojectors of system's Hamiltonian $H$, form
the optimal POVM minimizing the error probability for the discrimination
of thermal states $\varrho_1,\dots,\varrho_N$ with the uniform prior.
Whenever we observe the outcome $F_j$ we conclude the state is $\varrho_j$.
If for some $j$ the state $\varrho_j$ does not maximize $\tr{\varrho_j \Pi_l}$
for any of the projectors, then $F_j=O$ and the minimum-error procedure
never concludes such a state. As for the binary case also in this case the
optimal measurement can be implemented as an energy measurement. Observing the outcome
represented by the eigenprojector $\Pi_l$, i.e., measuring the value of energy $E_l$,
we conclude the state $\varrho_j$ maximizing the probability for this outcome.
The minimum error reads
\begin{eqnarray}
  \nonumber
  p_{\rm error}&=&1-\min_K \tr{K}\\
  \nonumber
  &=& 1-\frac{1}{N}\sum_l \tr{\Pi_l}\max\{\tr{\varrho_1\Pi_l},\dots,\tr{\varrho_N\Pi_l}\}\,.
\end{eqnarray}

Before we continue let us stress that the argumentation above applies also to the slightly more general case and we may formulate the following theorem.
\begin{theorem}
  Consider a set of mutually commuting states $\varrho_1,\dots,\varrho_N$ with apriori probabilities $\eta_1,\dots, \eta_N$. Let us denote by $\Pi_k$ the eigenprojectors of $\varrho_j$, thus, $\varrho_j=\sum_k w_{jk}\Pi_{k}$ for all $j$ and $w_{jk}=\tr{\varrho_j \Pi_k}$. Define the index set $I_j$ composed of indexes $k$ for which $\eta_j\tr{\varrho_j\Pi_k}$ is maximized by $\varrho_j$. Then the minimum-error discrimination measurement is composed of projectors $F_j=\sum_{k\in I_j} \Pi_k$ identifying the conclusion $\varrho_j$. The probability of success equals
\begin{eqnarray}
  \nonumber
  p_{\rm success}=\sum_k \tr{\Pi_k}\max\{\tr{\eta_1\varrho_1\Pi_k},\dots,\tr{\eta_N\varrho_N\Pi_k}\}\,.
\end{eqnarray}
\label{theorem}
  \end{theorem}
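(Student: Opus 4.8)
The plan is to mirror the argument already developed for thermal states and to establish optimality through the convex-duality characterization of the minimum-error problem stated above, namely $p_{\rm success}=\min_K\tr{K}$ subject to $K\geq\eta_j\varrho_j$ for all $j$, together with the complementary-slackness condition $\tr{F_j(K-\eta_j\varrho_j)}=0$ that certifies simultaneous optimality of a primal POVM $F$ and a dual variable $K$. Since the states $\varrho_1,\dots,\varrho_N$ mutually commute, they are simultaneously diagonalizable and share a common family of eigenprojectors $\{\Pi_k\}$, so that $\varrho_j=\sum_k w_{jk}\Pi_k$ with $w_{jk}=\tr{\varrho_j\Pi_k}$; this is the structural fact that renders the entire construction diagonal.

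First I would construct the dual candidate $K=\sum_k k_k\Pi_k$ with $k_k=\max_m\{\eta_m w_{mk}\}$. Feasibility is immediate: for each $j$ and each $k$ one has $k_k\geq\eta_j w_{jk}$ by definition of the maximum, and since $K-\eta_j\varrho_j=\sum_k(k_k-\eta_j w_{jk})\Pi_k$ is a sum of nonnegative multiples of orthogonal projectors it is positive semidefinite, i.e. $K\geq\eta_j\varrho_j$. Next I would construct the primal candidate $F_j=\sum_{k\in I_j}\Pi_k$, where $I_j$ collects the indices $k$ at which $\max_m\{\eta_m w_{mk}\}$ is attained by $m=j$. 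To guarantee $\sum_j F_j=I$ I would fix a tie-breaking rule assigning each $k$ to exactly one maximizing $j$, so that the sets $I_j$ partition the index set; then $\sum_j F_j=\sum_k\Pi_k=I$ and each $F_j$ is manifestly a projector. If some $\varrho_j$ never attains the maximum, the corresponding $I_j$ is empty and $F_j=O$, so the optimal procedure simply never concludes that state.

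The crux is the complementary-slackness check. For the chosen $F_j$ and each $k\in I_j$ the coefficient $k_k-\eta_j w_{jk}$ vanishes by the definition of $I_j$, hence $F_j(K-\eta_j\varrho_j)=\sum_{k\in I_j}(k_k-\eta_j w_{jk})\Pi_k=O$ and therefore $\tr{F_j(K-\eta_j\varrho_j)}=0$. Since $K$ is dual-feasible, $F$ is a valid POVM, and slackness holds term by term, both are optimal, whence $p_{\rm success}=\tr{K}=\sum_k k_k\tr{\Pi_k}=\sum_k\tr{\Pi_k}\max_m\{\eta_m\tr{\varrho_m\Pi_k}\}$, which is exactly the claimed formula.

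I expect the main obstacle to be bookkeeping rather than conceptual, namely ensuring that the tie-breaking assignment of each index $k$ to a single maximizer $j$ is simultaneously consistent with $\sum_j F_j=I$ and with the slackness identity. Any consistent assignment works, because on a tie the vanishing coefficient $k_k-\eta_j w_{jk}=0$ holds for every maximizing $j$, so the slackness condition is insensitive to which maximizer receives the projector; the sole genuine requirement is that each $\Pi_k$ be counted exactly once, which the partition guarantees. The remaining items—positivity of $K-\eta_j\varrho_j$ and the evaluation of the trace—are routine consequences of working in the common eigenbasis.
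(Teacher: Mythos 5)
Your proposal is correct and follows essentially the same route as the paper's own proof: the dual operator $K=\sum_k \max_m\{\eta_m w_{mk}\}\Pi_k$, feasibility in the common eigenbasis, and complementary slackness $\tr{F_j(K-\eta_j\varrho_j)}=0$ certifying the projective POVM built from the maximizing index sets. Your explicit tie-breaking rule ensuring the $I_j$ partition the indices is a small point the paper leaves implicit, but it does not change the argument.
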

\begin{proof}
  It is known \cite{BaeNJP2013,BaePRA2013}
  that using the convex duality of the question of success
optimization $p_{\rm success}=\max_F p_{\rm success}(F)$ coincides with the
following problem
\begin{eqnarray}
  p_{\rm success}=\min_{K}\tr{K}\,, K\geq \eta_j\varrho_j \quad\forall j\,.
\end{eqnarray}
Define $x_k=\max\{\eta_1\tr{\varrho_1\Pi_k},\dots,\eta_N\tr{\varrho_N\Pi_k}\}$
and set $K=\sum_k x_k \Pi_k$. By construction $K\geq \eta_j\varrho_j$ for all
$j$, because $\varrho_j=\sum_k \tr{\varrho_j \Pi_k}\Pi_k$ 
guarantees this conditions $ K\geq \eta_j \varrho_j$ hold for all $j$. The optimality of $K$ can
be argued \cite{BaeNJP2013,BaePRA2013} if we manage to design a measurement satisfying $\tr{F_j (K-\eta_j\varrho_j)}=0$.
Because of positivity of the $F_j$ and $K-\eta_j\varrho_j$ the orthogonality under trace
is equivalent to the orthogonality of the supports of these operators. Since
$K-\eta_j\varrho_j=\sum_l (x_l -\eta_j w_{jl}) \Pi_l$, it follows that
$F_j =\sum_{l: \arg[\max_m w_{ml}]=j} \Pi_l$ are orthogonal to $K-\eta_j\varrho_j$ and
it is straightforward to verify that $\sum_j F_j=I$. In fact,
each of the projectors in the range of $F_j$ cancels out from the sum  $\sum_l (k_l - w_{jl}) \Pi_l$.
Therefore, the designed POVM is optimal. In the case a state $\varrho_j$ does not maximize
any of the $\max_m w_{ml}$, then none of the term in the sum $\sum_l (k_l - w_{jl}) \Pi_l$ vanishes.
Consequently, the associated effect $F_j$ vanish, thus, the defined optimal POVM does not result
in the conclusion for $\varrho_j$.
  \end{proof}

\subsection{Qubits}
The formula for error probability can be evaluated for the case of qubit thermal states
in the presence of a fixed Hamiltonian. Without loss of generality we will assume the Hamiltonian is traceless,
thus, $H=\alpha\vec{n}\cdot\vec{\sigma}$ for a unit vector $\vec{n}$. The energies equal
$E_\pm=\pm\alpha$ and thermal state
\begin{eqnarray}
  \varrho_j=\frac{e^{\alpha\beta_j}\Pi_0+e^{-\alpha\beta_j}\Pi_1}{2\cosh[\alpha\beta_j]}\,.
\end{eqnarray}
We showed previously (Sec. II.C) that
$\beta_j>\beta_k$ implies $\tr{\varrho_j\Pi_0}>\tr{\varrho_k\Pi_0}$
and $\tr{\varrho_j\Pi_1}<\tr{\varrho_k\Pi_1}$. Therefore, for decreasingly ordered inverse temperatures
it follows $\tr{\varrho_1\Pi_0}>\cdots >\tr{\varrho_N\Pi_0}$
and $\tr{\varrho_1\Pi_1}<\cdots <\tr{\varrho_N\Pi_1}$. Consequently, observations
of $\Pi_0,\Pi_1$ lead to conclusions $\varrho_1,\varrho_N$, respectively. In other words,
the discrimination of $N$ thermal states of qubits concludes with nonzero probability only
the states with minimal and maximal temperatures, i.e.
\begin{eqnarray}
\nonumber  p_{\rm error}&=&1-\frac{1}{N}(\tr{\varrho_1\Pi_0} +\tr{\varrho_N\Pi_1})\\
\nonumber &=& 1-\frac{1}{N}(\tr{(\varrho_1-\varrho_N)\Pi_0}+1)\\
  &=&\frac{N-1}{N}+\frac{1}{2N}\left(\frac{\sinh{\alpha(\beta_1-\beta_N)}}{\cosh{\alpha\beta_1}\cosh{\alpha\beta_N}} \right)
\end{eqnarray}

\subsection{Temperature threshold}
Consider now the following situation (analogous discrimination problem
  was studied in \cite{HerzogPRA2002}). As before, we are given a promise that the source is producing
one of increasingly ordered thermal states
$\varrho_1,\dots,\varrho_N$, however, our goal is to decide only whether the
temperature is above, or below some specified (threshold) value $T_c$. Such
a tempererature splits the states $\varrho_1,\dots,\varrho_N$ into
two subsets $S_{-}, S_{+}$ depending on whether their temperature
is smaller, or larger than $T_c$, respectively. Assuming all the
states are equally likely we may introduce density operators
\begin{eqnarray}
  \nonumber
  \varrho_{-}=\frac{1}{N_{-}}\sum_{j\in S_{-}} \varrho_j, \qquad
    \varrho_{+}=\frac{1}{N_{+}}\sum_{j\in S_{+}} \varrho_j\,,
\end{eqnarray}
where $N_-+N_+=N$ and $N_\pm$ labels the number of thermal states below and above the specified temperature, respectively. The discrimination problem is reduced to the discrimination of these averages of thermal states a priori distributed as $q_\pm=N_\pm/N$. The state $\varrho_\pm$ are themselves not thermal states (except for the qubit case), however, they are still commuting and diagonal in the energy basis. Therefore, Theorem \ref{theorem} applies and the postprocessed measurement in the energy basis is optimal. Whenever $q_-\tr{\varrho_- Q_j}>q_+\tr{\varrho_+ Q_j}$, the observation of the energy $E_j$ leads us to conclusion $\varrho_-$, i.e. the temperature is below $T_c$. In the case of the opposite inequality we conclude the temperature is above $T_c$. 

\begin{example}
Consider the case of qubit states. We know that each thermal qubit states
with $H=\alpha\sigma_z$ can be parametrized as follows
\begin{equation}
\varrho=\frac{1}{2} (I- \tanh(\frac{\alpha}{T})\sigma_z).
\end{equation}
Using this form the states $\varrho_{-}$ and $\varrho_{+}$ takes the form
\begin{eqnarray}
\varrho_{\pm}=\frac{1}{2} (I- \tanh(\frac{\alpha}{T_{\pm}})\sigma_z)\,, 
\end{eqnarray}
with temperatures
\begin{eqnarray}
T_{\pm}=\frac{1}{\alpha}\tanh^{-1}\left(\frac{\sum_{j\in S_\pm}\tanh(\frac{\alpha}{T_j})}{N_{\pm}}\right)\,.
\end{eqnarray}
The threshold problem reduces to the discrimination of states $\varrho_\pm$ with new prior probabilities
$q_{\pm}=N_{\pm}/N$. The optimal measurement (Theorem \ref{theorem}) consists of projectors onto eigenvectors
of the Hamiltonian
\begin{eqnarray}
\pi_1=\ket{1}\bra{1}\,, \nonumber\\
\pi_2=\ket{0}\bra{0}\,.
\end{eqnarray}
The conclusion made (above or below threshhold), when the outcome $\pi_j$ is registered, follows from the comparison of values $q_{\pm}tr(\varrho_\pm \pi_j)$.

In particular, consider the case of three states $\varrho_1,\varrho_2,\varrho_3$
with increasingly ordered temperatures $T_1=0<T_2<T_3$ and set the threshold temperature $0<T_c<T_2$. That is $\varrho_-=\ket{0}\bra{0}$ and
$\varrho_+=\frac{1}{2}(\varrho_2+\varrho_3)$ appearing with probabilities $q_-=1/3$ and $q_+=2/3$. Shifting the energy spectrum of the Hamiltonian
such that $E_0=0$ and $E_1=2\alpha$, we get $1\leq Z_j=1+e^{-2\alpha/T_j}\leq 2$.
The joint probability that the state $\varrho_\pm$ is measured
and the ground energy is observed equals
\begin{eqnarray}
\frac{1}{3}\tr{\varrho_-\pi_2}  &=&\frac{1}{3}\bra{0}(\ket{0}\bra{0})\ket{0}=\frac{1}{3}, \\
\frac{2}{3}\tr{\varrho_+\pi_2}&=&\frac{1}{3}\bra{0}(\varrho_2+\varrho_3)\ket{0}=\frac{1}{3} (\frac{1}{Z_2}+\frac{1}{Z_3}),
\end{eqnarray}
respectively. Since $1/2\leq 1/Z_j\leq 1$ it follows
$1/3(1/Z_2+1/Z_3)>1/3$, thus, the state $\varrho_+$ is concluded whatever
is the choice of temperatures $T_2,T_3$ . Let us stress that the registration
of the excited energy associated with the projectors $\pi_1$ also leads
to the same conclusion, because $\tr{\varrho_- \pi_1}=0$.
\end{example}

Let us recall that for the discrimination of a uniformly distributed pair of
qubit thermal states, the registration of the ground energy (associated with the
projector $\pi_2$) always leads to the conclusion that the thermal state
is the one with the smaller temperature.
whenever $T_1<T_2$. However, the situation is different
in the case of the discrimination of $\varrho_+$ and $\varrho_-$, thus, when
the threshold value is evaluated.
By construction $T_-<T_+$, however, it might happen that observation of the
ground energy implies ( to minimize the error probability) that the
temperature is above the threshold value, i.e. $T_+$ is concluded.

As a result, it is illustrated by the example that separating the qubit
ground state from the collection of excited ones is trivial in a sense
that both discrimination measurement outcomes lead to the same conclusion.
In other words, for the associated threshold discrimination problem
we necessarily conclude the temperature is above the threshold.

%%%%%%%%%%%%%%%%%%%%%%%%%%%%%%%%%%%%%%%%%%%%%%%%%%%%%%%%%
\section{Thermal States with different Hamiltonians and fixed temperature} \label{sec4}
%%%%%%%%%%%%%%%%%%%%%%%%%%%%%%%%%%%%%%%%%%%%%%%%%%%%%%%%%
Thermal states $\varrho_\beta(H)$ with the same Hamiltonian are mutually
commuting, thus, in a sense their discrimination is classical. Indeed,
we showed the optimal measurement is built out of the projective
measurement of the energy. In this section we will analyze the discrimination
of thermal states for different Hamiltonians.

Let us recall that for qubits any state can be understood
as a thermal state associated with some Hamiltonian. Therefore, the discrimination
of thermal states with different Hamiltonians define a completely general
discrimination problem. For the general dimension, the discrimination of thermal states
of arbitrary Hamiltonians is different from the discrimination of any collection
of states. However, it make sense to restrict ourselves to a specific case.
In a typical thermodynamical model a system is in contact with a thermal reservoir that
determines its temperature, thus, the temperature is known. However, the details of
the Hamiltonian might be uncertain and we are interested in resolving its identity.
In the context of discrimination this means to identify one of a finite numbers
of possibilities.

Imagine a situation of a spin system in a constant magnetic field characterized by its strength
$B$ and its direction $\vec{b}$. Consider a pair of thermal states of the same temperature
$\varrho_j=\varrho_{\beta}(H_j)$ with $H_j=B_j \vec{b}_j\cdot\vec{\sigma})$. The case of fixed direction, i.e.,
$\vec{b}_1=\vec{b}_2=\vec{b}$ reduces to discrimination of commuting density operators, thus,
in accordance with Theorem \ref{theorem} the optimal discrimination is achieved by the
projective measurement of the energy, i.e. in the eigenbasis of
$\vec{b}\cdot\vec{\sigma}$. Here we will address the question
when $B_1=B_2=B$, thus, the strength of the considered Hamiltonians is the same.
In such a case
\begin{equation}
p_{\rm error}=\frac{1}{2} (1-\frac{1}{\sqrt{2}} |\tanh\frac{B}{T})|\sqrt{1-\vec{b}_1\cdot\vec{b}_2})\,.
\end{equation}
For a fixed value of $B$ the error probability increases (Fig. \ref{penocomm}) with the
temperature as expected, because the thermal states with infinite temperature
is independent of the Hamiltonian. This increase with the temperature
may be compensated by increasing of the energy of the Hamiltonian. The error vanishes when
$\vec{b}_2=-\vec{b}_1$ and $T=0$, thus, the ground states are orthogonal. 

The considered thermal states take the form $\varrho_\beta(H_j)=\frac{1}{2}(I-\tanh\frac{B}{T}\vec{b}_j\cdot\vec{\sigma})$, thus,
in the Bloch sphere representation they are represented by Bloch vectors $\vec{v}_j=-\tanh\frac{B}{T}\vec{b}_j$. 
The associated optimal measurement is composed of projectors \cite{HaPRA2013}
\begin{eqnarray}
\pi_+=\frac{1}{2}[I+\frac{(\eta_1 \vec{v_1}-\eta_2\vec{v_2}).\sigma}{||\eta_1 \vec{v_1}-\eta_2\vec{v_2}||}], \nonumber\\
\pi_-=\frac{1}{2}[I-\frac{(\eta_1 \vec{v_1}-\eta_2\vec{v_2}).\sigma}{||\eta_1 \vec{v_1}-\eta_2\vec{v_2}||}]\,,
\label{pi1pi2}
\end{eqnarray}
where $\eta_j$ are a priori probabilities of the occurrence of states $\varrho_j$, respectively.
Expressing $\vec{v}_j$ via $\vec{b}_j$ and assuming, for simplicity, that $\eta_1=\eta_2=1/2$ it follows
$\pi_\pm = \frac{1}{2}(I\pm \frac{\vec{b}_1-\vec{b}_2}{||\vec{b}_1-\vec{b}_2||}\cdot\vec{\sigma})$.
As a result we see that optimal measurement is measuring the spin along the direction $\vec{b}_1-\vec{b}_{2}$.
Moreover, let us stress it is independent of the temperature $T$ and also of the strength of the magnetic
field $B$.

\begin{figure}
\mbox{\includegraphics[scale=0.45]{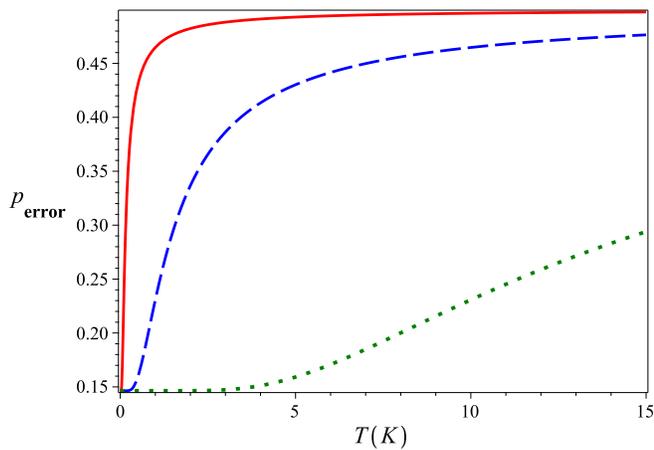}}
\caption{Probability of error in terms of $ T$ for $B=0.1 (T)$ (red Line) and $B=1 (T)$ (dashed blue Line) and $B=10 (T)$, (dotted green Line). For a fixed value of $B$ the error probability increases with the
temperature. It is because that the thermal states with infinite temperature
is independent of the Hamiltonian.}
\label{penocomm}
\end{figure}

\section{Conclusion} \label{conclusion}
In this paper, we studied the problem of minimum error discrimination of thermal states. The case of unambiguous discrimination simplifies as the only state that can be unambiguously concluded is the ground state associated with the zero temperature. In its full generality the investigated problem coincides with the general discrimination problem of a collection of pure and full-rank states, because the set of thermal states coincide with the set of such states. This means that the general problem is sufficiently complex to formulate general take-home messages, thus, we restricted ourselves to specific instances of the problem: fixed Hamiltonian, fixed temperature, and the so-called temperature threshold problem.

We first analyzed the discrimination of pairs of thermal states with the same Hamiltonian and showed that the measurement in the energy basis is the optimal one. Calculating the probability of error for the asymptotic cases $T_1=0$ and $T_1\to\infty$ we identified the critical temperature $T_*$. By comparing a temperature $T$ with $T_*$ we can find if it can be better discriminated with higher or lower temperature and if $T=T_*$ then the same result can be obtained with very high or very low temperatures. Then we analyzed the effect of the Hamiltonian strength $\alpha$ for the cases with some fixed temperature difference and observed that states of large temperatures are becoming less and less distinguishable. 

We generalized some of the results to the case of a $d$-dimensional Hamiltonian and in Theorem \ref{theorem} we formulated the optimal discrimination for a general set of mutually commuting states. Specially, the optimal discrimination of thermal states for qubits concludes only the largest and the smallest of the temperatures. Moreover, the observation of the ground or excited energy implied the temperature was the smallest or the largest, respectively. For general systems the minimum-error measurement identified at most $d$ different temperatures.

Further, we investigated the identification problem deciding whether the temperature of thermal states was above, or below some threshold value of the temperature $T_c$. We pointed out the situations when this problem became trivial in a sense that only the ``above" conclusion was possible, hence, no discrimination measurement was needed. Finally, we extended the investigation to a noncommutative case. We showed that when the qubit Hamiltonians were different (unitarily related), but the temperature was fixed, then the optimal discrimination measurement was independent of the (fixed) temperature and the interaction strength. Naturally, this question is an instance of the discrimination of Hamiltonians, thus, the optimal measurement is qualitatively related to discrimination of the associated energy measurements and the best distinguishable situation coincide with the discrimination of ``orthogonal" Hamiltonians $\pm H$. However, a deeper analysis is required to relate the ``thermal" discrimination of Hamiltonians with the discrimination of the associated observables. 

\acknowledgments{This research was supported by projects OPTIQUTE (APVV-18-0518) and HOQIT (VEGA 2/0161/19/).}

%%%%%%%%%%%%%%%%%%%%%%%%%%%%%%%%%%%%%%%%%%%%%%%%%%%%%%%%%55
%%%%%%%%%%%%%%%%%%%%%%%%%%%%%%%%%%%%%%%%%%%%%%%%%%%%%%%%%55
%%%%%%%%%%%%%%%%%%%%%%%%%%%%%%%%%%%%%%%%%%%%%%%%%%%%%%%%%55

\end{document}